\let\subparagraph\paragraph
\titlespacing*{\subsection}{0pt}{*1.5}{*1}
\titlespacing*{\paragraph}{0pt}{*0.5}{*0.5}
\setlist[itemize]{leftmargin=*}
\setlist[enumerate]{leftmargin=*}
\def\arcr{\@arraycr}
\definecolor[named]{ACMBlue}{cmyk}{1,0.1,0,0.1}
\definecolor[named]{ACMYellow}{cmyk}{0,0.16,1,0}
\definecolor[named]{ACMOrange}{cmyk}{0,0.42,1,0.01}
\definecolor[named]{ACMRed}{cmyk}{0,0.90,0.86,0}
\definecolor[named]{ACMLightBlue}{cmyk}{0.49,0.01,0,0}
\definecolor[named]{ACMGreen}{cmyk}{0.20,0,1,0.19}
\definecolor[named]{ACMPurple}{cmyk}{0.55,1,0,0.15}
\definecolor[named]{ACMDarkBlue}{cmyk}{1,0.58,0,0.21}
\definecolor[named]{PaleGreen}{RGB}{196, 255, 231}
\definecolor[named]{PaleOrange}{RGB}{255, 213, 169}
\definecolor{intnull}{RGB}{213,229,255}
\definecolor{shadecolor}{gray}{1.00}
\definecolor{ddarkgray}{gray}{0.5}
\definecolor{darkgray}{gray}{0.30}
\definecolor{light-gray}{gray}{0.91}
\newcommand{\angled}[1]{\left\langle{#1}\right\rangle}
\newcommand{\paren}[1]{\left({#1}\right)}
\newcommand{\ie}{\emph{i.e.}\xspace}
\newcommand{\eg}{\emph{e.g.}\xspace}
\newcommand{\dom}[1]{\mathsf{dom}\paren{#1}\xspace}
\newcommand{\wrt}{\emph{wrt.}\xspace}
\newcommand{\Iff}{\emph{iff}\xspace}
\newcommand{\eqdef}{\triangleq}
\newcommand{\cached}[1]{#1}
\newcommand{\denot}[1]{\llbracket{#1}\rrbracket}
\newcommand{\ac}[1]{{\textcolor{blue}{(Andreea: {#1})}}}
\newcommand{\np}[1]{{\textcolor{ACMRed}{(Nadia: {#1})}}}
\newcommand{\logic}{SSL\xspace}
\newcommand{\logicx}{BoSSL\xspace}
\definecolor{pblue}{rgb}{0.13,0.13,1}
\definecolor{pgreen}{rgb}{0,0.5,0}
\definecolor{pred}{rgb}{0.9,0,0}
\definecolor{pgrey}{rgb}{0.46,0.45,0.48}
\definecolor{ckeyword}{HTML}{7F0055}
\definecolor{ccomment}{HTML}{3F7F5F}
\definecolor{cnumber}{HTML}{2A0099}
\lstdefinelanguage{SynLang}{
  keywords={new, let, if, else, null, return, while},
  ndkeywords={bool, int, void, loc},
  mathescape=true,
  showspaces=false,
  showtabs=false,
  breaklines=true,
  showstringspaces=false,
  breakatwhitespace=true,
  lineskip=-0.9pt,
  morecomment=[l]{//}, 
  morecomment=[s]{/*}{*/}, 
  basewidth={0.54em, 0.4em},%
  basicstyle=\footnotesize\ttfamily,
  keywordstyle={\color{ACMPurple}\ttfamily\bfseries},
  ndkeywordstyle={\color{pblue}\ttfamily\bfseries},
  commentstyle={\color{ccomment}\itshape},
  numbers=none,
}
\lstdefinestyle{numbers}
{
  numbers=left,
  numberstyle=\scriptsize\sf,
  xleftmargin=15pt
}
\newcommand{\code}[1]{\lstinline[language=SynLang,basicstyle=\small\ttfamily,mathescape=true]{#1}}
\newcommand{\set}[1]{\left\{{#1}\right\}}
\newcommand{\many}[1]{\overline{#1}}
\protected\def\ccell#1#{%
  \kern-\fboxsep
  \@ccell{#1}%
}
\def\@ccell#1#2#3{%
  \colorbox#1{#2}{#3}%
  \kern-\fboxsep
}
\def\mut{\mathsf{M}} 
\def\imm{\mathsf{RO}} 
\def\ann{a}      
\def\iiann{\alpha} 
\newcommand{\iann}[1]{\mcode{ {#1}}} 
\newcommand{\writeontop}[2]{\mathrel{\stackrel{\makebox[0pt]{\mbox{\tiny{#2}}}}{#1}}}
\newcommand{\ispointsto}[3]{#1{\writeontop{\ \!\pts\ \!}{\iann{#3}}}{#2}} 
\newcommand{\ipointsto}[3]{\ispointsto{\angled{#1}}{#2}{#3}} 
\newcommand{\ipred}[3]{\pred{#1}(#2)} 
\newcommand{\iblock}[3]{\left[{#1}, {#2}\right]\!^{#3}}
\newcommand{\oprulen}[1]{[\underline{{\bf \scriptstyle \rulename{OP}-#1}}]}
\newcommand{\ebpsem}[2]{
	\begin{prooftree}
		\Hypo{\oprulen{#1}}
		\Infer[no rule]1{{\begin{prooftree}[rule style=simple] #2 \end{prooftree}}}
	\end{prooftree}
}
\def\ilset{R}
\newcommand{\isatsl}[3]{{#1}{\vDash}^{\ctx,{#3}}_{\interp} {#2}}
\newcommand{\pts}{\mapsto}
\newcommand{\True}{\mathsf{true}}
\newcommand{\bl}[1]{{\color{ACMDarkBlue}{#1}}}
\newcommand{\asn}[1]{{\bl{\set{#1}}}}
\newcommand{\pred}[1]{\mathsf{#1}}
\newcommand{\trans}[3]{\left.\bl{#1} \!\leadsto\! \bl{#2} \right| #3}
\newcommand{\subst}{\sigma}
\newcommand{\subs}[2]{#2/#1}
\newcommand{\dotcup}{\ensuremath{\mathaccent\cdot\cup}}
\newcommand{\union}{\mathbin{\dotcup}}
\newcommand{\slf}[1]{\mathcal{#1}}
\newcommand{\slP}{\slf{P}}
\newcommand{\slQ}{\slf{Q}}
\newcommand{\interp}{\slf{I}}
\newcommand{\sep}{\ast}
\newcommand{\Sep}{\text{\scalebox{1.8}{$\sep$}}}
\newcommand{\emp}{\mathsf{emp}}
\newcommand{\rulename}[1]{\textsc{#1}}
\newcommand{\readr}{\rulename{Read}\xspace}
\newcommand{\writer}{\rulename{Write}\xspace}
\newcommand{\sleft}{\rulename{SubstLeft}\xspace}
\newcommand{\allocr}{\rulename{Alloc}\xspace}
\newcommand{\freer}{\rulename{Free}\xspace}
\newcommand{\applyr}{\rulename{Call}\xspace}
\newcommand{\ev}[1]{\mathsf{EV}\left({#1}\right)}
\newcommand{\vars}[1]{\mathsf{Vars}\left({#1}\right)}
\newcommand{\gv}[1]{\mathsf{GV}\left({#1}\right)}
\newcommand{\env}{\Gamma}
\newcommand{\ctx}{\Sigma}
\newcommand{\fctx}{\Delta}
\newcommand{\indp}{\mathcal{D}}
\newcommand{\fun}{\mathcal{F}}
\newcommand{\pclause}[3]{\angled{{#1}, \set{{#2}, {#3}}}}
\newcommand{\sel}{e}
\newcommand{\kw}[1]{{\tt\bf{\color{ACMPurple}{#1}}}~}
\newcommand{\Letz}{\kw{let}}
\newcommand{\Ifz}{\kw{if}}
\newcommand{\malloc}[1]{{\tt{malloc}}(#1)}
\newcommand{\free}[1]{{\tt{free}}(#1)}
\newcommand{\Elsez}{\kw{else}}
\newcommand{\deref}[1]{*{#1}}
\newcommand{\off}{\iota}
\newcommand{\size}{\mathit{n}}
\newcommand{\ccd}[1]{\text{\code{#1}}}
\newcommand{\mcode}[1]{{\ensuremath{\tt #1}}}
\newcommand{\prog}{c}
\newcommand{\cerror}{{\tt{err}}}
\newcommand{\opsem}[3]{{#1}; {#2} \rightsquigarrow {#3}}
\newcommand{\Opsem}{\rightsquigarrow}
\newcommand{\Opsemt}{\rightsquigarrow^{*}}
\newcommand{\opsemt}[3]{{#1}; {#2} \rightsquigarrow^{*} {#3}}
\newcommand{\satsl}[2]{{#1}{\vDash}^{\ctx}_{\interp} {#2}}
\newcommand{\hide}[1]{}
\begin{document}

\title{
Concise Read-Only Specifications for 
\\
Better Synthesis of Programs with Pointers
\vspace{-5pt}
}

\author{
  Andreea Costea\inst{1} \and
  Amy Zhu\inst{2}\thanks{Work done during an 
  internship at NUS School of Computing in Summer 2019.} \and
  Nadia Polikarpova\inst{3} \and
  Ilya Sergey\inst{1,4}
}

\institute{
  School of Computing, National University of Singapore, Singapore
  \and
  University of British Columbia, Vancouver, Canada
  \and
  University of California, San Diego, USA
  \and
  Yale-NUS College, Singapore
}

\maketitle

\vspace{-15pt}

\thispagestyle{plain}
\pagestyle{plain}

\section*{Syntax}

See \autoref{fig:lang} for the programming language and \autoref{fig:logic} for the assertion language.
For simplicity, we formalize a core language without theories;
the only sorts in the core language are locations, booleans, and permissions
(where permissions appear only in specifications)
and the pure logic only has equality.
In contrast, our implementation supports integers and sets
(where the latter also only appear in specifications),
with linear arithmetic and standard set operations.
Although we do not formalize sort-checking of formulae,
for readability, we will use the meta-variable $\alpha$ where the intended sort of the pure logic term is ``permission''. 
\np{I also relaxed syntactic restrictions on blocks and predicate instances
  that required their subformulas to be variables. I think it was an
  unnecessary restriction, but feel free to change it back if it breaks
  anything.}
\ac{i wondered the same at some point. It doesn't break anything. We can
  leave it like this.}

\begin{figure}[t]
\setlength{\belowcaptionskip}{-10pt}
\centering
\[
\!\!\!\!
{\small
\begin{array}{l@{\!\!}r@{\ \ }c@{\ \ }l}
  \text{Variable} & x, y & & \text{Alpha-numeric identifiers}
  \\[2pt]
  \text{Expression} & e & ::= &
   0 \mid \True \mid x \mid e = e \mid e \wedge e \mid \neg e
  \qquad \qquad \quad
  \text{Size, offset}   ~~ \size, \off \qquad \text{Non-negative integers}
  \\[2pt]
  \text{Command} &
  c & ::= &
            \Letz~{x} = \deref{(x + \off)} \mid
            \deref{(x + \off)} = e \mid 
            \Letz~{x} = \malloc{\size} \mid
            \free{x} \\
            & & & \mid \cerror \mid f(\many{e_i})
            \mid c; c \mid 
        \Ifz (e) \set{c}~\Elsez \set{c}
  \\[2pt]
  \text{Fun. dict.} & \fctx & ::= & \epsilon \mid \fctx, f~(\many{x_i})~\set{~c~}
\end{array}
}
\]
\caption{Programming language grammar.}
\label{fig:lang}
\end{figure}

\begin{figure}[t]
  \centering
\[
\!\!\!\!\!\!\!\!\!\!\!\!
{\small
\begin{array}{l@{\ \ }l@{\ \ }c@{\ \ }l}
  \text{Pure logic term} & \phi, \psi, \chi, \alpha & ::= &
  0 \mid \True \mid \mut \mid \imm \mid x \mid \phi = \phi \mid \phi \wedge \phi \mid \neg \phi 
  \\[2pt]
  \text{Symbolic heap} &\mcode{ P, Q, R} & ::= &
  \mcode{\emp \mid \ipointsto{e, \off}{e}{\iiann}
  \mid
  \iblock{e}{\off}{\iiann} \mid \mcode{\ipred{p}{\many{\phi_i}}{\many{\iiann_i}}}
      \mid P \sep Q}
  \\[2pt]
  \text{Heap predicate} & \indp & ::= &
  \mcode{\ipred{p}{\many{x_i}}{\many{\ann_j}}~\many{\pclause{\sel_k}{\chi_k}{R_k}}}
  \\[2pt]
  \text{Function spec} & \fun & ::= &
  \mcode{ f(\many{x_i}) : \asn{\slP} \asn{\slQ}}
  \\[2pt]
  \text{Assertion} &  \slP, \slQ & ::= &
  \mcode{\set{\phi, P}}
  \\[2pt]
  \text{Environment} & \mcode{\env} & := &
  \mcode{  \epsilon \mid \env, x}
  \qquad ~
  \text{Context} \quad \mcode{\ctx} \quad  ~~~:= ~ \mcode{\epsilon \mid
    \ctx, \indp \mid \ctx, \fun}
  \\[2mm]
\end{array}
}
\]
\caption{\logicx assertion syntax.}
\label{fig:logic}
\end{figure}

\section*{Rules}

New rules of \logicx are shown in \autoref{fig:irules}.
Writing into a heap location requires this location to be explicitly annotated as mutable.
Note that if you have a precondition like 
${\asn{a = \mut ; \ipointsto{x}{5}{a}}}$,
we can first apply a normalization rule like \sleft to transform it into the form ${\asn{\mut = \mut ; \ipointsto{x}{5}{\mut}}}$,
at which point the \writer rule can be applied.
Note also that \allocr does not require specific permissions on the block in the postcondition;
if they turn out to be $\imm$, the resulting goal will be unsolvable,
but this way an existential permission would work.
\np{This is probably not how it's implemented, so feel free to change it
  back to $\mut$.} \ac{indeed it's not implemented this way. But I think
  it's ok to leave the formalism this way, since the implementation is just
  a prototype. We shall refine the implementation.}

\begin{figure}[!t]
\setlength{\belowcaptionskip}{-10pt}
\centering
{\footnotesize
\begin{mathpar}
\inferrule[\writer]
%
%
{
\mcode{\vars{e} \subseteq \env}
\qquad
\mcode{e \neq e'}
\\
\mcode{\env; \trans{{\asn{\phi ;
             \ipointsto{{x,\off}}{e}{\mut} \sep P}}}
{\asn{\psi;  \ipointsto{{x,\off}}{e}{\mut} \sep Q}} {\prog}
}}
{
  \mcode{\env;
    \trans
    {\asn{\phi; \ipointsto{{x,\off}}{e'}{\mut}  \sep P}}
    {\asn{\psi;  \ipointsto{{x,\off}}{e}{\mut} \sep Q}}
    {\deref{(x + \off)} = e; \prog}
  }
}
\\
\\
\inferrule[\allocr]
{
\mcode{  R = \iblock{z}{n}{{\iiann}} \sep \Sep_{0 \leq i < n}\paren{
    \ipointsto{z, i}{e_i}{\iiann_i}}
}
\qquad
\mcode{\paren{\set{y} \cup \set{\many{t_i}}} \cap \vars{\env, \slP, \slQ} = \emptyset}
\qquad
\mcode{z \in \ev{\env, \slP, \slQ}}
\\
\mcode{R' \eqdef \iblock{y}{n}{\mut} \sep \Sep_{0 \leq i < n}
  \paren{\ipointsto{y,i}{t_i}{\mut}}}
\\
\mcode{\ctx; \env; \trans{\asn{\phi; P \sep R'}}{\asn{\psi; Q \sep R}}{\prog}
}}
{
  \mcode{\ctx; \env;
  \trans
  {\asn{\phi; P}}
  {\asn{\psi; Q \sep \cached{R}}}
  {\Letz y = \malloc{n}; \prog}
}}
\\
\inferrule[\freer]
{
  \mcode{R = \iblock{x}{n}{{\mut}} \sep
    \Sep_{0 \leq i < n}\paren{
      \ipointsto{x,i}{e_i}{\mut} }}
 \qquad
 \mcode{\vars{\set{x} \cup \set{\many{e_i}}} \subseteq \env}
 \\
\mcode{\ctx; \env; \trans{\asn{\phi; P}}{\asn{\slQ}}{\prog}}
}
{\mcode{
\ctx; \env; \trans{\asn{\phi; P \sep \cached{R}}}{\asn{\slQ}}{\free{x}; \prog}
}}
\end{mathpar}
}
\caption{\logicx derivation rules.}
\label{fig:irules}
\end{figure}


\section*{Memory Model}

The memory model follows closely the one introduced for Separation Logic
\cite{OHearn-al:CSL01,Reynolds:LICS02}.
We assume \mcode{\text{Loc} \subset \text{Val}}.
\begin{mathpar}
  \begin{array}{lrcl}
    \text{(Heap)}
    & \mcode{h \in \text{Heaps}}
    & \mcode{::=}
    & \mcode{\text{Loc} \rightharpoonup \text{{Val}}}
    \\
    \text{(Stack)}
    & \mcode{s \in \text{Stacks}}
    & \mcode{::=}
    & \mcode{\text{Var} \rightarrow \text{{Val}}}
  \end{array}
\end{mathpar}

To enable C-like accounting of dynamically-allocated memory blocks,
we will assume that the heap \mcode{h} can also store sizes of allocated blocks in dedicated locations
(conceptually, this part of the heap corresponds to the meta-data of the memory allocator).
This accounting is required to make sure that only a previously allocated memory block can be disposed (as opposed to any set of allocated locations),
and also enables the \mcode{free} command to accept a single argument: the address of the block.
To model this meta-data, we introduce a function \mcode{bl\colon \text{Loc}\to \text{Loc}},
where \mcode{bl(x)} denotes the location in the heap where the block meta-data for the address \mcode{x} is stored,
if \mcode{x} is the starting address of a block.
In an actual language implementation, \mcode{bl(x)} might be \eg\ \mcode{x - 1}
(\ie the meta-data is stored right before the block).

Since we have opted for an unsophisticated permission mechanism,
where the \emph{heap ownership is not divisible}, but some heap locations are restricted to
\mcode{\imm},
the definition of the satisfaction relation \mcode{\isatsl{}{}{\ilset}}
for the annotated assertions in a particular context \mcode{\ctx}
and given an interpretation \mcode{\interp}, is parameterized
with a fixed set of read-only locations, \mcode{\ilset}:
\begin{itemize}
\item \mcode{\isatsl{\angled{h, s}}{\asn{\phi; \emp}}{\ilset}}
  ~\Iff~
  \mcode{\denot{\phi}_s = \True} and \mcode{\dom{h} = \emptyset}.

\item \mcode{\isatsl{\angled{h, s}}{\asn{\phi;
        \ipointsto{e_1,\off}{e_2}{{\iiann}}}}{\ilset}}
  ~\Iff~
  \mcode{\denot{\phi}_s = \True}
  and \mcode{l\eqdef\denot{e_1}_s + \off}
  and \mcode{\dom{h} = \{l\}}
  and \mcode{h(l) = \denot{e_2}_s}
  %
  and \textcolor{ACMRed}{\mcode{l\in\ilset \Leftrightarrow \iiann=\imm}}
  .
  
\item \mcode{\isatsl{\angled{h, s}}{\asn{\phi; \iblock{e}{\size}{\iiann}}}{\ilset}}
  ~\Iff~
  \mcode{\denot{\phi}_s = \True}
  \textcolor{ACMRed}{
  and \mcode{l\eqdef bl(\denot{e}_s)}
  and \mcode{\dom{h} = \{l\}}
  and \mcode{h(l) = \size}
  and \mcode{l\in\ilset \Leftrightarrow \iiann=\imm}}.
  %

\item \mcode{\isatsl{\angled{h, s}}{\asn{\phi; P_1 \sep P_2 }}{\ilset}}  ~\Iff~
  \mcode{\exists~ h_1, h_2, h = h_1 \union h_2} and
  \mcode{\isatsl{\angled{h_1, s}}{\asn{\phi; P_1}}{\ilset}} and \\
  \mcode{\isatsl{\angled{h_2, s}}{\asn{\phi; P_2 }}{\ilset}}.

\item \mcode{\isatsl
    {\angled{h,s}}
    {\asn{\phi;\ipred{p}{\many{\psi_i}}{}}}
    {\ilset}}
  \Iff
  \mcode{\denot{\phi}_s = \True}
  and \mcode{\indp \eqdef \ipred{p}{\many{x_i}}{}
    \many{\pclause{\sel_k}{\chi_k}{R_k}} \in \ctx}
  and\\
  \mcode{\angled{h,\many{\denot{\psi_i}_s}} \in \interp(\indp)}
  and \mcode{
  {\bigvee_{k}(
    \isatsl{\angled{h, s}}{[\many{\subs{x_i}{\psi_i}}]\asn{\phi \wedge \sel_k \wedge \chi_k; R_k}}{\ilset}
    )}}.
\end{itemize}

\textcolor{ACMRed}{
There are two non-standard cases: points-to and block, whose permissions must agree with \ilset.
Note that in the definition of satisfaction, we only need to consider that case where the permission $\iiann$ is a value (\ie either \mcode{\imm} or \mcode{\mut}).
Although in a specification $\iiann$ can also be a variable,
well-formedness guarantees that this variable must be logical, and hence will be substituted away in the definition of validity.
}

\np{I moved the invariant to the last theorem, since this is where we prove that it's preserved by any valid execution.
It is not true in general: \mcode{\slP} could be true in an empty heap, which certainly does not include all RO locations;
intuitively, \ilset\ is defined wrt the initial complete heap of the
program.}
\ac{yes, you're right, it is not true in general.}

We stress on the fact that a reference which has \mcode{\imm} permissions
to a certain symbolic heap, still retains the full ownership of that heap,
with the restriction that it is not allowed to update or deallocate it.
Note that deallocation additionally requires a mutable permission for the enclosing block.

\section*{Operational semantics}

The operational semantics is the traditional one \textcolor{ACMRed}{(but with blcok accounting in the rules \allocr and \freer)}:
\begin{center}
{
\footnotesize
\renewcommand{\arraystretch}{4}
$
\begin{array}{c}
  \ebpsem{\writer}{{
      \Hypo{\mcode{
          (\denot{x}_s + \off) {\in} \dom{h}
          \qquad
          h'\eqdef h[(\denot{x}_s + \off) \rightarrow \denot{e}_s]
        }}
      \Infer1{\mcode{
          \opsem
          {\fctx}
          {\angled{h,(\deref{(x + \off)} = e ;\prog,s) \cdot S}}
          {\angled{h',(\prog,s) \cdot S}}
        }}
    }}
  \\
  \ebpsem{\readr}{{
      \Hypo{\mcode{
          (\denot{x}_s + \off) {\in} \dom{h}
          \qquad
          s'\eqdef s[y \rightarrow h(\denot{x}_s + \off)]
        }}
      \Infer1{\mcode{
          \opsem
          {\fctx}
          {\angled{h,(\Letz y = \deref{(x + \off);\prog},s) \cdot S}}
          {\angled{h,(\prog,s') \cdot S}}
        }}
    }}
  \\
  \ebpsem{\allocr}{{
      \Hypo{\mcode{
          y {\notin} \dom{s}
          \quad
          \exists l . bl(l) \notin \dom{h} \wedge
          \forall ~ i {\in} 0 {\ldots} \size - 1,
          ~
          (l{+}i) {\notin} \dom{h}         
        }}
      \Infer[no rule]1{\mcode{
          s' \eqdef s[y \rightarrow l]
          \quad
          h' \eqdef h\union (bl(l)\rightarrow \size) \union [(l{+}i) \rightarrow \_]
        }}
      \Infer1{\mcode{
          \opsem
          {\fctx}
          {\angled{h,(\Letz y = \malloc{\size};\prog,s) \cdot S}}
          {\angled{h',(\prog,s') \cdot S}}
        }}
    }}
  \\
  \ebpsem{\freer}{{
      \Hypo{\mcode{
          l \triangleq \denot{x}_s \quad
          \size \triangleq h(bl(l))
        }}
      \Infer[no rule]1{\mcode{
          h = h'\union (bl(l)\rightarrow \size) \union [(l{+}i) \rightarrow \_]
        }}
      \Infer1{\mcode{
          \opsem
          {\fctx}
          {\angled{h,(\free{x};\prog,s) \cdot S}}
          {\angled{h',(\prog,s) \cdot S}}
        }}
    }}
  \\
  \ebpsem{\applyr}{{
      \Hypo{\mcode{
          f(\many{x_i})\{\prog'\} \in \fctx
          \quad
          s' \eqdef [x_i \rightarrow \denot{y_i}_s]
        }}
      \Infer1{\mcode{
          \opsem
          {\fctx}
          {\angled{h,(f(\many{y_i});\prog,s) \cdot S}}
          {\angled{h', (\prog',s') \cdot (\prog,s) \cdot S}}
        }}
    }}
\end{array}
$
}
\end{center}

\section*{Soundness}

The validity definition and the soundness proofs of \logic are ported to
\logicx without any modification, since our current definition of
satisfaction implies the one defined for \logic.
\np{Not quite true anymore because of blocks in OS and satisfaction,
but they should be trivial and also irrelevant to the substance of this paper, so I suggest we gloss over that.}

\begin{definition}[Validity]
  \label{def:validity}
  We say that a well-formed Hoare-style specification
  \mcode{\ctx; \env; \asn{\slP}~\prog~\asn{\slQ}} is \emph{valid} \wrt the
  function dictionary \mcode{\fctx} \Iff
  whenever
  \mcode{\dom{s} = \env},
  \mcode{\forall \subst_{\text{gv}} = [\many{x_i \mapsto d_i}]_{x_i \in \gv{\env, \slP,
      \slQ}}}
  such that
  \mcode{\satsl{\angled{h, s}}{[\subst_{\text{gv}}]\slP}},  and
  \mcode{\opsemt{\fctx}{\angled{h, (c, s) \cdot \epsilon}}{\angled{h',
      (\ccd{skip}, s') \cdot \epsilon}}},
  it is also the case that
  \mcode{\satsl{\angled{h', s'}}{[\subst_{\text{ev}} \union
      \subst_{\text{gv}}]}\slQ}
  for some \mcode{\subst_{\text{ev}} = [\many{y_j \mapsto d_j}]_{y_j \in \ev{\env,
      \slP, \slQ}}}.
\end{definition}

\np{I'm a bit confused by the $\exists \subst_{\text{gv}}$ in this definition:
maybe I got mixed up with logical connectives, but don't we want the program to run correctly for \emph{any} assignment to ghost vars
that satisfies the starting heap, not just some assignment?} \ac{agree, i
changed to \mcode{\forall}}.

\begin{theorem}[RO Heaps Do Not Change]
\label{thm:ro-heaps}
Given a Hoare-style specification
\mcode{\ctx; \env;
  {\asn{\phi;P}}
  {\prog}
  {\asn{\slQ}}
},
which is valid \wrt the function dictionary
\mcode{\fctx},
and a set of read-only memory locations \mcode{\ilset},
if:
  \begin{enumerate}[label=\emph{(\roman*)}]
  \item
    \mcode{\isatsl{\angled{h, s}}{[\subst]\slP}{\ilset}},
    for some 
    \mcode{\mcode{h,s}}
    and
    \mcode{\subst},
    and
  \item
    \mcode{\opsemt{\fctx}
      {\angled{h, (c, s) \cdot \epsilon}}
      {\angled{h',(c', s') \cdot \epsilon}}}
    for some 
    \mcode{h',s'}
    and
    \mcode{c'}
  \item
    \mcode{\ilset \subseteq \dom{h}}
  \end{enumerate}
  then 
  \mcode{\ilset \subseteq \dom{h'}}
  and
  \mcode{\forall l \in \ilset,~ h(l) = h'(l).}
\end{theorem}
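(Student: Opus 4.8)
The plan is to prove a stronger statement by induction on the length of the execution sequence~(ii), generalising from the two single-frame endpoint configurations to arbitrary intermediate ones; this is necessary because \applyr pushes a call frame, so the configurations visited in between carry a non-trivial stack. For the specifications of interest validity comes with a \logicx derivation~$\mathcal{D}$ of the triple, and the invariant I propagate along the trace, for a configuration $\angled{h_i,(c_1,s_1)\cdots(c_n,s_n)}$ reached after $i$ steps, reads: \emph{(a)}~$\ilset\subseteq\dom{h_i}$ and $h_i(l)=h(l)$ for every $l\in\ilset$; \emph{(b)}~each frame $(c_j,s_j)$ is governed by a sub-derivation of~$\mathcal{D}$ with precondition $\asn{\phi_j;P_j}$, and the sub-heap of $h_i$ owned by that frame satisfies $\asn{\phi_j;P_j}$ (under the ambient substitution) with respect to some RO set $\ilset_j$; \emph{(c)}~$\ilset$ intersected with that owned sub-heap is contained in~$\ilset_j$, and $c_j$ only ever mutates or frees locations inside its owned sub-heap. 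Clauses~\emph{(b)} and~\emph{(c)} are exactly what the \logic soundness proof, re-run for \logicx with the permission annotations now tracked, delivers, and this is the only place~$\mathcal{D}$ is used; clause~\emph{(a)} read off at the endpoint $\angled{h',(c',s')\cdot\epsilon}$ is the theorem.

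The inductive step is a case analysis on the operational rule fired in the last step. The operational \readr leaves the heap unchanged, so~\emph{(a)} is immediate and \emph{(b)}--\emph{(c)} follow from \logic preservation. For the operational \writer the step updates location $l=\denot{x}_s+\off$; the root rule of the sub-derivation governing the top frame is the logical \writer, which requires the chunk at $l$ to be $\mut$-annotated in $\asn{\phi_j;P_j}$, so the points-to clause of the satisfaction relation, $l\in\ilset_j\Leftrightarrow\iiann=\imm$, gives $l\notin\ilset_j$, hence $l\notin\ilset$ by~\emph{(c)}, and $h$ still agrees with the new heap on~$\ilset$. The operational \allocr only introduces the fresh locations $bl(l)$ and the $l{+}i$, all outside $\dom{h_i}\supseteq\ilset$, so~\emph{(a)} survives. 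The operational \freer is symmetric to \writer: the root rule of the governing sub-derivation is the logical \freer, which requires the whole block being freed---its meta-data cell and each of its cells---to be $\mut$-annotated, so the removed locations miss~$\ilset_j$, hence~$\ilset$, and $\ilset\subseteq\dom{h'}$ is kept. The operational \applyr pushes a callee frame: by the logical \applyr side condition the callee footprint of the current heap satisfies the callee precondition, and we give it the RO set $\ilset_j$ restricted to that footprint, which still contains every location of~$\ilset$ lying there; the caller's remaining footprint---and the part of~$\ilset$ inside it---is left untouched by the callee thanks to~\emph{(c)} for its frame, applied along the strictly shorter sub-trace that runs the callee. Returning from a call pops a frame without altering~$h$ or~$\ilset$.

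The main obstacle is not this rule-by-rule bookkeeping but setting up and maintaining clauses~\emph{(b)}--\emph{(c)}: nothing in the bare operational semantics stops a callee from scribbling on the caller's frame, or a write from landing on an $\imm$ cell, so the argument genuinely rests on the frame/footprint preservation lemma of \logic together with the syntactic fact that, in a \logicx derivation, \writer and \freer are only ever applied to $\mut$-annotated heap. Carrying the \logic soundness development through again with the annotations in place---and, in particular, checking that the RO set is adjusted monotonically and consistently as heap chunks are framed out to a callee and framed back in on return---is the real content; granting that, the stated theorem is just the endpoint instance of clause~\emph{(a)}.
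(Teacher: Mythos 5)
There is a real mismatch between your setup and the statement you are asked to prove. Your strengthened invariant, clauses \emph{(b)}--\emph{(c)}, is phrased in terms of a \logicx derivation of the triple and its sub-derivations governing each stack frame; but the theorem's hypothesis is only \emph{validity} in the sense of the Validity definition, a purely semantic property of the triple under the operational semantics. No derivation is supplied, and no completeness result is available to manufacture one, so under the stated hypotheses your invariant cannot even be formulated. Moreover, the load-bearing part of your argument --- that each frame only mutates or frees locations inside its owned sub-heap, and that per-frame RO sets $\ilset_j$ can be threaded consistently through \applyr and frame pops --- is exactly what you defer to an unspecified re-run of the \logic soundness development ``with annotations tracked''. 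That deferred lemma \emph{is} the theorem (specialised to derivable triples), so as written the proposal is an elaborate reduction of the claim to itself plus bookkeeping, rather than a proof.

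The paper's own argument is much lighter and stays entirely at the semantic level: it proves a per-step lemma --- if \mcode{\isatsl{\angled{h,s}}{\cdot}{\ilset}} holds and a single step modifies memory, the modified locations lie outside \ilset --- and then inducts on \mcode{\Opsemt}. Only the heap-modifying rules matter. For \writer, the key move is that an \imm annotation on the written location would contradict the validity assumption, so the annotation is \mut, and the satisfaction clause \mcode{l\in\ilset \Leftrightarrow \iiann=\imm} puts the location outside \ilset; for \allocr the affected locations are fresh, hence outside \mcode{\dom{h}\supseteq\ilset}; for \freer validity plus the satisfaction clauses for blocks and points-to give \mut on the freed cells and meta-data, so they too miss \ilset. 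No derivations, no per-frame footprints, no adjustment of \ilset across calls is needed, because the lemma is purely about which concrete locations a single operational step touches. Your concern about intermediate configurations with deeper stacks (from \applyr) is legitimate --- the paper glosses over how the annotated satisfaction is known at intermediate points --- but the remedy is to strengthen the induction semantically (validity of the residual configurations), not to import a proof-theoretic invariant the theorem does not give you.
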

\begin{proof}
  Our proof strategy is to prove that if
  \mcode{\isatsl{}{}{\ilset}} holds and
  \mcode{\Opsem} modifies some memory locations, then those
  locations are not in \mcode{\ilset}.
  The theorem then follows by induction on \mcode{\Opsemt}.
  For brevity, we only show the base cases for the heap update:\\
  \noindent{\bf Case:} \mcode{\writer}. Assume that
  (1) \mcode{\isatsl{\angled{h, s}}
    {[\subst]\ipointsto{x + \off}{\_}{\iiann} \sep P'}{\ilset}},
  and also that 
  (2) \mcode{\opsemt{\fctx} 
    {\angled{h, (\deref{(x + \off)} = e;\prog, s) \cdot \epsilon}}
    {\angled{h',(\prog, s') \cdot \epsilon}}}.
  Assuming in (1) that \mcode{\iiann \neq \mut},
  contradicts the validity assumption of
  \mcode{\ctx; \env;
    {\asn{\phi;P}}
    {\deref{(x + \off)} = e;\prog}
    {\asn{\slQ}}
  }, therefore \mcode{\iiann=\mut}.
  By (i) and \mcode{\iiann=\mut}, it follows that 
  (3) \mcode{(\denot{x}_s + \off) \notin \ilset}.
  By (1) it follows that 
  (4) \mcode{h'=h[(\denot{x}_s + \off) \rightarrow \denot{e}_s]}.
  Finally, by (3) and (4) it follows that
  \mcode{\forall l \in \ilset,~ h(l) = h'(l).}
  In addition, since \mcode{dom(h) = dom(h')}, then from (iii) it follows that \mcode{\ilset \subseteq \dom{h'}}.\\
  \noindent{\bf Case:} \mcode{\allocr}. Assuming that
  \mcode{\opsemt{\fctx} 
    {\angled{h, (\Letz y = \malloc{\size};\prog, s) \cdot \epsilon}}
    {\angled{h',(\prog, s') \cdot \epsilon}}},
  it follows then that 
  (1)
  {\mcode{
      \forall ~ i {\in} 0 {\ldots} \size{-}1,
      ~
      (l{+}i) {\notin} \dom{h}
      ~\text{and}~
      bl(l) \notin \dom(h)
      ~\text{and}~
      h' \eqdef h\union(bl(l) \rightarrow \size)\union[(l{+}i) \rightarrow \_]
    }}.
  From (1) and (iii) we conclude that
  that \mcode{\forall l \in \ilset,~ h(l) = h'(l).}
  In addition, since \mcode{dom(h) \subset dom(h')}, then from (iii) it follows that \mcode{\ilset \subseteq \dom{h'}}.\\
  \noindent{\bf Case:} \mcode{\freer}. 
  Assuming that
  \mcode{
          \opsem
          {\fctx}
          {\angled{h,(\free{x};\prog,s) \cdot S}}
          {\angled{h',(\prog,s) \cdot S}}
        }
  we get (1) \mcode{h = h'\union (bl(l)\rightarrow \size) \union [(l{+}0) \rightarrow \_ \ldots (l{+}\size - 1) \rightarrow \_]} 
  (where \mcode{l \triangleq \denot{x}_s} and \mcode{\size \triangleq h(bl(l))}):
  hence the values stored in the heap are not modified, and it is only left to prove that \mcode{\ilset \subseteq \dom{h'}},
  \ie that \mcode{bl(l) \notin \ilset} and \mcode{(l{+}i) \notin \ilset} for \mcode{i \in 0\ldots \size - 1}.
  By validity and (i), we have (2) 
  \mcode{\isatsl{\angled{h, s}}{[\subst](\iblock{x}{n}{{\mut}} \sep
    \Sep_{0 \leq i < n}\paren{
      \ipointsto{x,i}{e_i}{\mut} } ) \sep P'}{\ilset}}.
  The rest follows from (2) and the definition of satisfaction.
  \qed
\end{proof}


\bibliographystyle{plain}
\bibliography{references}


\end{document}